\documentclass[preprint,12pt,nopreprintline]{elsarticle}

\usepackage[T1]{fontenc}
\usepackage[utf8]{inputenc}
\usepackage{lmodern}
\usepackage{amsmath,amssymb,amsfonts,amsthm}
\usepackage{braket}
\usepackage{graphicx}
\usepackage{booktabs}
\usepackage{array}
\usepackage{tabularx}
\usepackage{tikz}
\usepackage{microtype}
\usepackage{enumitem}
\usepackage{float}
\usetikzlibrary{arrows.meta,positioning,calc,fit}

\newcommand{\Ueff}{U_{\mathrm{eff}}}
\newcommand{\Reff}{R_{\mathrm{eff}}}
\newcommand{\pBSM}{p_{\mathrm{BSM}}}

\newcommand{\Tr}{\operatorname{Tr}}

\newtheorem{proposition}{Proposition}

\begin{document}

\begin{frontmatter}

\title{Rotational Bell-State Loop-Back Key Establishment and Agreement Certification for Passive Users}

\author[inst1]{Luis Adri\'an Lizama-P\'erez\corref{cor1}}
\ead{l.lizama@correo.ler.uam.mx}
\cortext[cor1]{Corresponding author}

\affiliation[inst1]{organization={Departamento de Sistemas de Informaci\'on y Comunicaciones, Divisi\'on de Ciencias B\'asicas e Ingenier\'ia, Universidad Aut\'onoma Metropolitana, Unidad Lerma},
addressline={Av. de las Garzas No. 10, Col. El Pante\'on},
city={Lerma},
postcode={52005},
state={Estado de M\'exico},
country={Mexico}}

\begin{abstract}
Server-aided Bell-state networks with sequential local encoding are an established approach to mediated key distribution. Against this background, we study a Loop-Back architecture in which two source-free, detector-free users transform the same traveling qubit. Alice privately samples $r\in\mathbb{Z}_2^2$, prepares $\ket{\beta_r}$ over the complete Bell basis, retains one qubit, and uses $r$ as the reference for the returned pair. A deterministic Pauli-composition mode, $\Ueff=U_2U_1$, serves as a reference to prior serial-unitary protocols. The main extension instead uses the two same-axis settings $R(\pm\alpha)$ and a reference-versus-complement Bell test. Opposite rotations cancel, whereas equal rotations add, so
\[
P(C\mid\mathrm{disagreement})=0,\qquad
P(C\mid\mathrm{agreement})=\sin^2(2\alpha).
\]
At $\alpha=\pi/8$, agreement is certified with probability $1/2$, and the overall conclusive-key probability is $1/4$ for unbiased choices. The same event supplies one shared raw-key bit. For the Pauli mode, a one-qubit Pauli twirl hides either user's factor from an endpoint mediator that may prepare an arbitrary qubit--ancilla state and choose the final measurement, provided the other Pauli is uniform and the traveler is inaccessible between users. The rotational guarantee is narrower: common-sign privacy holds only for the prescribed Bell source and coarse-grained instrument. A polarization implementation can normalize the private reference to the singlet and use a beam splitter with verified two-photon detection; loss produces a third, discarded outcome. The proposal is theoretical and does not claim composable security or experimental validation.
\end{abstract}

\begin{keyword}
mediated quantum key establishment \sep
Loop-Back quantum communication \sep
Bell states \sep
passive quantum users \sep
rotation encoding \sep
agreement certification \sep
Pauli twirling
\end{keyword}

\end{frontmatter}

\section{Introduction}

Quantum key distribution (QKD) has evolved from prepare-and-measure protocols to entanglement-based, measurement-device-independent, and networked architectures in which security, deployment cost, and trust are redistributed among different nodes \cite{bennett2014bb84,pirandola2020,lo2012}. A related line of work asks how much quantum functionality must remain at an end user. Semi-quantum protocols restrict one or more participants to a small set of operations, while mediated semi-quantum key distribution (M-SQKD) allows two restricted users to establish a key with the assistance of a quantum server \cite{boyer2007,krawec2015}. These models are particularly relevant when detectors, synchronized sources, memories, or stabilized interferometers are undesirable at the network edge.

Reducing user-side hardware can also remove mechanisms used to verify the server or detect active probing. We therefore study a specific trust--resource trade-off rather than claim device-independent or composable QKD. Here a \emph{passive user} neither prepares nor measures the protocol carrier and requires no quantum memory; the user only applies a calibrated transformation to the traveling subsystem. This endpoint model is stricter than the broader notion of a restricted or semi-quantum user.

A direct precedent is the server-aided Bell-state network of Li \emph{et al.} \cite{li2005}. Their server prepares and measures Bell states, while two users encode sequentially and recover one another's operations from the announced composition. That serial Pauli mechanism predates the present work. Li's users, however, perform single-particle control measurements, and Bob prepares replacement decoys to test the inter-user link and a potentially dishonest server. Removing those functions changes the security model; it is not a cost-free hardware reduction. We therefore retain the Pauli construction only as a reference mode and state the resulting trust and active-probing boundaries explicitly.

The broader mediated literature also contains stronger adversarial models. Krawec showed that two restricted users can establish a key through an untrusted quantum server with an unconditional-security proof \cite{krawec2015}. Circular M-SQKD subsequently placed restricted users on a sequential transport path and proved security against an untrusted third party \cite{ye2023circular,du2023circular}. More recent work has reduced measurement requirements and investigated lightweight mediated architectures \cite{zhou2024measurementfree}. These protocols use different user operations and verification structures from the present Loop-Back construction, but they set an important benchmark: a claim of secrecy against a malicious mediator requires an explicit mechanism or proof, not merely algebraic ambiguity.

Within the Loop-Back family, the architectural development followed a different route. Early Loop-Back formulations centralized preparation and detection while simplifying the remote optical node \cite{lizama2025symmetry,lizama2025entropy}. The subsequent passive-user construction uses single-qubit BB84 states and local polarization rotations $R(\pm\pi/8)$; the serial transformation $\Reff=R_2R_1$ either cancels or moves the state toward a conjugate basis, leading to an intrinsic conclusive probability $P_{\mathrm{conc}}=1/4$ \cite{lizama2026pauliarxiv}. A later single-user Bell-state Loop-Back construction introduced the retained-half geometry and, crucially, allowed the source to choose the initial Bell state privately over the complete Bell basis \cite{lizama2026bellarxiv}. The private label acts as a hidden reference: the station that prepared the pair can invert the observed Bell transition, whereas a party lacking the initial label has no Bell-frame reference with which to interpret that transition. The Bell formulation also changes the physical information carrier because the traveling subsystem is locally maximally mixed and the encoded information resides in the two-qubit correlations recovered at the active station.

The present work combines these two Loop-Back developments. It preserves the private complete-basis Bell reference and restores the $R(\pm\pi/8)$ mechanism of passive-user single-qubit Loop-Back. Successive Bell-state rotations are already used in QPC, including the $R_y(0)/R_y(\pi)$ construction of Hou and Wu \cite{hou2025ry}. The contribution claimed here is narrower: a private Bell reference, two passive endpoints, symmetric small-angle rotations on one traveler, and a binary test are combined so that one conclusive event both certifies agreement and supplies the users with a common raw-key bit. Li \emph{et al.} do not define this cancellation/addition map or its dual relation-and-key interpretation; their announced full Bell displacement instead supports deterministic reconstruction of the users' Pauli encodings \cite{li2005}.

This positioning is also consistent with current optical-network research. Recent \emph{Optical Switching and Networking} studies address resilience and entanglement management in quantum networks \cite{tunc2025,njoku2025}, while implementation-level work on ETSI QKD key-delivery interfaces emphasizes that usable key services depend on network and key-management overhead beyond physical-layer generation rates \cite{lauterbach2026}. The present contribution is situated one layer below those network-control problems: it studies an access primitive that moves quantum preparation and detection away from resource-constrained endpoints.

The two modes support different security statements. In the Pauli mode, a one-qubit Pauli twirl hides each factor from an endpoint mediator that controls the initial qubit--ancilla state and final measurement, provided the other Pauli is uniform and no intermediate access is possible. In the rotational mode, correctness and common-sign privacy require the specified Bell preparation and binary instrument. Active access to an individual modulator remains outside both results.

\section{Position with respect to prior work}
\label{sec:prior}

\subsection{Sequential Bell networks and mediated restricted-user protocols}

Li \emph{et al.} \cite{li2005} is the closest precedent to the deterministic Pauli mode. Their server prepares the Bell pair and performs the final Bell measurement, while the users apply four local encodings. Step S1 fixes the carrier to $\ket{\phi^+}_{AB}$, and the users actively verify the channel: both perform single-particle control measurements, and Bob may replace the traveler with a decoy before the Bob--Carol link. These operations test the server and channel and are part of the security mechanism.

Loop-Back instead centralizes preparation and detection and restricts each remote user to a transformation module \cite{lizama2025symmetry,lizama2025entropy,lizama2026pauliarxiv}. Extending this endpoint model to two Bell-state users removes Li's local verification functions. The result is a different allocation of quantum resources, with source-free, detector-free, memory-free users but a stricter trust boundary. Sections~\ref{sec:pauli} and~\ref{sec:security} state what remains protected and what does not.

The earlier single-user Bell-state Loop-Back protocol \cite{lizama2026bellarxiv} generalizes a different layer of the construction. Instead of fixing the initial Bell reference, the source samples $r\in\mathbb{Z}_2^2$ and prepares
\begin{equation}
\ket{\beta_r}=(I\otimes P_r)\ket{\Phi^+}.
\label{eq:prior-hidden-reference}
\end{equation}
For any Pauli displacement $P_u$,
\begin{equation}
(I\otimes P_u)\ket{\beta_r}\sim\ket{\beta_{r\oplus u}},
\label{eq:bell-covariance-prior}
\end{equation}
so the same transition algebra holds across all four Bell references. If $r$ is sampled uniformly and withheld from a party that sees only the final Bell label $m=r\oplus u$, then $m$ is uniform for every fixed $u$. This hidden-reference property is useful against observers that do not know $r$, but it must not be confused with secrecy against the source itself: Alice knows $r$ and can invert the displacement. In the present two-user protocol, private Bell-reference blindness and private factorization of $\Ueff$ are therefore treated as separate mechanisms.

M-SQKD provides a second comparison class. In the original mediated formulation, two users with restricted quantum capability establish a key through a server that may be adversarial \cite{krawec2015}. Circular transport has subsequently been analyzed with unconditional-security claims and explicit key-rate bounds \cite{ye2023circular,du2023circular}. Zhou \emph{et al.} further studied a measurement-free mediated scheme based on single-particle states \cite{zhou2024measurementfree}. These results show that ``lightweight users'' is not by itself a novelty claim; the relevant questions are which operations are removed, which verification mechanism replaces them, and what adversarial model remains covered.

A third comparison class is quantum private comparison (QPC). Zhou \emph{et al.} use $d$-level GHZ states for multi-party semi-quantum comparison, whereas Huang \emph{et al.} use high-dimensional Bell states to compare the size relation of two classical users' inputs \cite{zhou2025sqpc,huang2026sqpc}. These protocols establish that lightweight users and high-dimensional entanglement are already available for relation testing, but their goals, operations, and disclosed relations differ from the binary agreement event studied here. Tong \emph{et al.} provide a further $d$-dimensional Bell-state comparison design \cite{tong2026sqpc}.

Rotation encoding is also established in QPC. Hou, Sun, and Zhang combine Bell states with local rotations \cite{hou2024rotation}; Hou and Wu use a retained Bell half and successive $R_y(0)$ or $R_y(\pi)$ operations for deterministic equality testing under a semi-honest third party \cite{hou2025ry}. Their convention is $R_y(\vartheta)=\exp(-i\vartheta Y/2)$. The Loop-Back rotation used below is
\begin{equation}
R(\theta)=e^{-i\theta Y}
=
\begin{pmatrix}
\cos\theta&-\sin\theta\\
\sin\theta&\cos\theta
\end{pmatrix}
=R_y(2\theta),
\label{eq:rotation-convention}
\end{equation}
so the angles should not be compared without accounting for the factor of two. The present protocol uses the symmetric pair $R(\pm\alpha)$ and deliberately obtains a one-sided conclusive relation rather than deterministic equality.

Table~\ref{tab:prior} summarizes the prior architectures. It is intentionally conservative: sequential Bell transport, circular routing, rotations, and restricted users are not claimed as individually new.

\begin{table}[H]
\centering
\caption{Closely related prior architectures and their relation to the present construction. Detailed security models remain those of the original references.}
\label{tab:prior}
\scriptsize
\setlength{\tabcolsep}{4pt}
\renewcommand{\arraystretch}{1.08}
\begin{tabularx}{\textwidth}{>{\raggedright\arraybackslash}p{2.25cm}>{\raggedright\arraybackslash}p{3.55cm}>{\raggedright\arraybackslash}p{4.15cm}>{\raggedright\arraybackslash}X}
\toprule
Work & Path and user operations & Mediator/readout and security mechanism & Principal relation to this work \\
\midrule
Li \emph{et al.} \cite{li2005} & Fixed $\ket{\phi^+}$; four user unitaries; user control measurements and Bob-side replacement decoys & Full BSM; control and decoy rounds test the server and links & Closest serial-unitary precedent; users are active security participants \\
Prior Bell Loop-Back \cite{lizama2026bellarxiv} & One user; private $\ket{\beta_r}$ over the Bell basis; one Pauli & Full BSM relative to secret $r$ & Source-private complete-basis Bell reference inherited here \\
Krawec \cite{krawec2015} & Mediated SQKD with restricted users & Potentially adversarial server; unconditional-security analysis & Benchmark for malicious-mediator security \\
Ye \emph{et al.} \cite{ye2023circular} & Circular mediated path with restricted users & Protocol checks and unconditional-security analysis & Shows that circular restricted-user key establishment is established \\
Zhou \emph{et al.} \cite{zhou2024measurementfree} & Single-particle scheme; measurement-free users with reordering/delay functions & Third-party preparation and Bell measurement; attack analysis & Benchmark for reducing user measurement requirements \\
Hou--Wu \cite{hou2025ry} & Retained Bell half; sequential $R_y(0)$ or $R_y(\pi)$; decoy measurements & Full Bell comparison under a semi-honest TP & Bell-rotation precedent with deterministic equality \\
\bottomrule
\end{tabularx}
\end{table}

Against these precedents, the Pauli mode below is used as a deterministic baseline. The claimed extension is the rotational mode: it transfers the Loop-Back cancellation/addition rule to the retained-half Bell geometry and uses a binary event for the dual task of agreement certification and raw-key extraction. Table~\ref{tab:tradeoff} later compares only the resource and performance consequences of the Loop-Back modes, avoiding duplication of the prior-work comparison.

\subsection{Loop-Back lineage}

Figure~\ref{fig:lineage} separates the two ingredients that converge in the present protocol. The passive-user single-qubit branch contributes the serial small-angle algebra $R_2R_1=R(\theta_1+\theta_2)$. The single-user Bell branch contributes the retained subsystem, the locally maximally mixed traveler, and the private complete-basis Bell reference $r$. The present two-user construction combines these ingredients: the deterministic Pauli mode exposes the distributed displacement, whereas the rotational mode restores the original cancellation/addition rule and coarse-grains the Bell readout to a relation test.

\begin{figure}[H]
\centering
\begin{tikzpicture}[>=Latex,font=\small]
\tikzset{stage/.style={draw,rounded corners,align=center,text width=5.4cm,minimum height=2.35cm,inner sep=7pt}}
\node[stage] (rot0) at (-3.7,1.9) {\textbf{Passive-user single-qubit Loop-Back}\\BB84-like carrier; two sequential users\\$R(\pm\pi/8)$ and $R_2R_1=R(\theta_1+\theta_2)$\\nonorthogonal intermediates; $P_{\rm conc}=1/4$};
\node[stage] (bell1) at (3.7,1.9) {\textbf{Single-user Bell Loop-Back}\\private $r\in\mathbb{Z}_2^2$; $\ket{\beta_r}$ over the full Bell basis\\one retained half; local Pauli encoding\\$\rho_B=I/2$; full BSM relative to private $r$};
\node[draw,rounded corners,align=center,text width=12.3cm,minimum height=2.85cm,inner sep=8pt] (present) at (0,-2.0) {\textbf{Present two-user Bell Loop-Back framework}\\[2pt]
\emph{Inherited structure:} private complete-basis Bell reference $r$, retained subsystem $A$, and one traveler visiting $B_1$ then $B_2$.\\[3pt]
\textbf{Mode I -- Pauli reference:} $U_{\rm eff}=U_2U_1$ with full Bell-displacement readout.\qquad
\textbf{Mode II -- rotational:} $R(\pm\alpha)$ with binary private-reference/complement readout.};
\draw[->,thick] (rot0.south) -- ($(present.north)+(-3.0,0)$);
\draw[->,thick] (bell1.south) -- ($(present.north)+(3.0,0)$);
\end{tikzpicture}
\caption{Architectural lineage. The construction joins the small-angle rotation algebra of passive-user single-qubit Loop-Back with the retained-half, complete-basis private reference of the single-user Bell extension.}
\label{fig:lineage}
\end{figure}

\section{Architecture and notation}
\label{sec:architecture}

Let
\begin{equation}
\ket{\Phi^+}=\frac{\ket{00}+\ket{11}}{\sqrt{2}},
\end{equation}
and label the Bell basis as
\begin{equation}
\ket{\beta_r}=(I\otimes P_r)\ket{\Phi^+},\qquad r\in\mathbb{Z}_2^2,
\label{eq:bell-label}
\end{equation}
where, modulo global phase,
\begin{equation}
P_{00}=I,\qquad P_{01}=Z,\qquad P_{10}=X,\qquad P_{11}=Y.
\label{eq:pauli-labels}
\end{equation}
Following the earlier single-user Bell-state Loop-Back construction \cite{lizama2026bellarxiv}, Alice samples $r$ uniformly from $\mathbb{Z}_2^2$, prepares $\ket{\beta_r}$, keeps $r$ private as a Bell reference, retains subsystem $A$, and sends subsystem $B$ through
\begin{equation}
A\longrightarrow B_1\longrightarrow B_2\longrightarrow A.
\label{eq:path}
\end{equation}
The same subsystem $B$ is therefore transformed twice before Alice measures the returned pair. For arbitrary user unitaries $U_1,U_2$,
\begin{equation}
\rho_{AB}\longmapsto
(I\otimes U_2U_1)\rho_{AB}(I\otimes U_1^\dagger U_2^\dagger).
\label{eq:serial}
\end{equation}
This serial action, rather than the particular choice of codebook, is the Loop-Back invariant used by both modes below.

For an honestly prepared Bell state, the traveler is locally maximally mixed:
\begin{equation}
\rho_B=\Tr_A\bigl(\ket{\beta_r}\!\bra{\beta_r}\bigr)=\frac{I}{2}.
\label{eq:maxmix}
\end{equation}
No local unitary applied by either user changes this marginal.

The private reference also gives a basis-independent covariance relation. If the total Pauli displacement has label $u$, then
\begin{equation}
(I\otimes P_u)\ket{\beta_r}\sim\ket{\beta_{r\oplus u}}.
\label{eq:bell-covariance}
\end{equation}
Writing the measured Bell label as $M=r\oplus u$, uniform secret $r$ gives
\begin{equation}
\Pr(M=m\mid u)=\frac14,\qquad I(U;M)=0,
\label{eq:hidden-reference-blindness}
\end{equation}
for a party that has access to the final Bell label but not to $r$. Alice is intentionally excluded from this blindness statement because she generated and retains $r$. Thus, Eq.~\eqref{eq:hidden-reference-blindness} formalizes hidden-reference blindness, whereas Section~\ref{sec:pauli} separately analyzes privacy of the individual user factors against Alice.

The covariance in Eq.~\eqref{eq:bell-covariance} can be seen directly in Table~\ref{tab:bell-transitions}. Global phases are omitted. Every Pauli operation permutes the entire Bell basis bijectively, so the inference rule used below is independent of which Bell state Alice selected as the private reference.

\begin{table}[H]
\centering
\caption{Action of a Pauli operation on the traveling qubit for all four possible Bell references. Global phases are omitted.}
\label{tab:bell-transitions}
\small
\setlength{\tabcolsep}{9pt}
\begin{tabular}{c c c c c}
\toprule
$U$ on traveler & $\ket{\Phi^+}$ & $\ket{\Phi^-}$ & $\ket{\Psi^+}$ & $\ket{\Psi^-}$ \\
\midrule
$I$ & $\ket{\Phi^+}$ & $\ket{\Phi^-}$ & $\ket{\Psi^+}$ & $\ket{\Psi^-}$ \\
$Z$ & $\ket{\Phi^-}$ & $\ket{\Phi^+}$ & $\ket{\Psi^-}$ & $\ket{\Psi^+}$ \\
$X$ & $\ket{\Psi^+}$ & $\ket{\Psi^-}$ & $\ket{\Phi^+}$ & $\ket{\Phi^-}$ \\
$Y$ & $\ket{\Psi^-}$ & $\ket{\Psi^+}$ & $\ket{\Phi^-}$ & $\ket{\Phi^+}$ \\
\bottomrule
\end{tabular}
\end{table}

Figure~\ref{fig:architecture} shows the common path, the private Bell reference, and the two readout choices.

\begin{figure*}[t]
\centering
\resizebox{0.96\textwidth}{!}{%
\begin{tikzpicture}[>=Latex,font=\small]
\tikzset{
  op/.style={draw,rounded corners,align=center,text width=2.45cm,minimum height=1.45cm,inner sep=6pt},
  mode/.style={draw,rounded corners,align=left,text width=5.45cm,minimum height=2.15cm,inner sep=7pt}
}

\node[op] (a0) at (-5.4,2.0) {\textbf{Alice: source}\\sample private $r$\\prepare $\ket{\beta_r}$\\retain subsystem $A$};
\node[op] (b1) at (-1.8,2.0) {\textbf{$B_1$}\\apply $U_1$\\source-free\\detector-free};
\node[op] (b2) at (1.8,2.0) {\textbf{$B_2$}\\apply $U_2$\\source-free\\detector-free};
\node[op] (a1) at (5.4,2.0) {\textbf{Alice: readout}\\receive traveler $B$\\measure relative\\to private $r$};
\draw[->,thick] (a0.east) -- (b1.west);
\draw[->,thick] (b1.east) -- (b2.west);
\draw[->,thick] (b2.east) -- (a1.west);

\draw[dashed,thick] (a0.north) -- ++(0,1.35) -- ++(10.8,0) -- (a1.north);
\node[fill=white,inner sep=2pt] at (0,3.55) {private Bell label $r$ retained at Alice};

\draw[thick] (a0.south) -- ++(0,-1.35) -- ++(10.8,0) -- (a1.south);
\node[fill=white,inner sep=2pt] at (0,0.40) {retained subsystem $A$ remains at Alice};

\node[mode] (m1) at (-3.15,-3.15) {\textbf{Mode I -- Pauli reference.}\\
$U_i=P_{u_i}$. Full Bell readout gives $m=r\oplus u_2\oplus u_1$; Alice removes the private reference and obtains $u_{\rm eff}=u_2\oplus u_1$.};
\node[mode] (m2) at (3.15,-3.15) {\textbf{Mode II -- rotational.}\\
$U_i=R(s_i\alpha)$. Alice uses only $\{\Pi_r,I-\Pi_r\}$ to test whether the privately selected Bell reference returned or an orthogonal component was detected.};
\end{tikzpicture}%
}
\caption{Common retained-half path. Alice keeps subsystem $A$ and the private Bell label $r$ while the same traveler passes through the two transformation-only users. The Pauli mode resolves the displacement; the rotational mode uses a reference-versus-complement event.}
\label{fig:architecture}
\end{figure*}

\section{Mode I: deterministic Pauli composition}
\label{sec:pauli}

The Pauli mode supplies the deterministic baseline for the rotational construction. Let
\begin{equation}
U_i=P_{u_i},\qquad u_i\in\mathbb{Z}_2^2.
\end{equation}
Modulo global phase,
\begin{equation}
P_{u_2}P_{u_1}\sim P_{u_2\oplus u_1},
\end{equation}
so the returned Bell state is
\begin{equation}
(I\otimes P_{u_2}P_{u_1})\ket{\beta_r}
\sim \ket{\beta_{r\oplus u_1\oplus u_2}}.
\label{eq:pauli-bell}
\end{equation}
If the BSM returns Bell label $m$, Alice computes the displacement relative to her hidden reference,
\begin{equation}
u_{\rm eff}=m\oplus r=u_2\oplus u_1.
\label{eq:ueff}
\end{equation}
Thus the inference rule is valid for every initial Bell state; no Bell family is privileged. A party that knows $m$ but not $r$ cannot perform this inversion.
After Alice announces $u_{\rm eff}$, each user reconstructs the other's value:
\begin{equation}
u_2=u_{\rm eff}\oplus u_1,\qquad
u_1=u_{\rm eff}\oplus u_2.
\end{equation}
The ordered pair or either agreed component can then seed classical reconciliation and privacy amplification. This is a mediated key-establishment relation, not a claim that the complete protocol already satisfies a composable QKD definition.

\paragraph{Worked key-establishment round}
As a concrete round, let Alice privately choose $\ket{\Phi^-}$ as the Bell reference. Suppose $B_1$ applies $X$ and $B_2$ applies $Z$. Then $U_2U_1=ZX\sim Y$, and Table~\ref{tab:bell-transitions} gives
\begin{equation}
(I\otimes Y)\ket{\Phi^-}\sim\ket{\Psi^+}.
\end{equation}
Alice therefore observes $\ket{\Psi^+}$, removes her private reference, and announces only the effective displacement $Y$. From this announcement, $B_1$, who knows that it applied $X$, infers that $B_2$ applied $Z$; conversely, $B_2$ infers $X$. Both users thus reconstruct the same ordered pair $(X,Z)$, which can be treated as a shared raw symbol before the required classical post-processing. The example illustrates the reconstruction rule; the factor-privacy statement against Alice is the stronger ensemble result proved below.

\subsection{Factor privacy from Pauli randomization}

The four-fold degeneracy of Eq.~\eqref{eq:ueff} already hides each uniform Pauli factor from the public effective label. A stronger statement follows from Pauli randomization and directly addresses the role of the mediator.

\begin{proposition}[Pauli-twirling factor privacy]
\label{prop:twirl}
Let $R$ be an arbitrary ancilla controlled by Alice and let $B$ be a single-qubit traveler confined to the admitted optical mode. Alice may prepare any joint state $\rho_{RB}$ on this space, not necessarily a Bell state. Suppose $B_1$ applies a fixed Pauli $P_{u_1}$ and $B_2$ independently samples $u_2$ uniformly from $\mathbb{Z}_2^2$. If Alice has no access to $B$ between the two user operations, then the state returned to Alice, averaged over $u_2$, is
\begin{equation}
\rho^{(u_1)}_{RB,\rm out}
=\rho_R\otimes\frac{I_B}{2},
\label{eq:twirl-result}
\end{equation}
independent of $u_1$. The same statement holds with the users interchanged.
\end{proposition}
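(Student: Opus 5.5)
The plan is to write the averaged output state explicitly and reduce the claim to the single-qubit Pauli twirl identity
\begin{equation}
\frac14\sum_{v\in\mathbb{Z}_2^2}(I_R\otimes P_v)\,X_{RB}\,(I_R\otimes P_v^\dagger)=\Tr_B(X_{RB})\otimes\frac{I_B}{2},
\label{eq:twirl-identity-plan}
\end{equation}
which holds for every operator $X_{RB}$ on $R\otimes B$, not only for states. Because Alice has no access to $B$ between the two user operations, the returned state for fixed $u_1,u_2$ is
\begin{equation}
\sigma_{u_2}=(I_R\otimes P_{u_2}P_{u_1})\,\rho_{RB}\,(I_R\otimes P_{u_1}^\dagger P_{u_2}^\dagger),
\end{equation}
exactly as in Eq.~\eqref{eq:serial} with an ancilla in place of $A$. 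Averaging over uniform $u_2$ and applying Eq.~\eqref{eq:twirl-identity-plan} to $X_{RB}=(I\otimes P_{u_1})\rho_{RB}(I\otimes P_{u_1}^\dagger)$ gives $\Tr_B(X_{RB})\otimes I_B/2$. The partial trace over $B$ is invariant under a unitary acting on $B$ alone, so $\Tr_B(X_{RB})=\rho_R$, and therefore the result is $\rho_R\otimes I_B/2$, which does not depend on $u_1$. Global phases in the labels of Eq.~\eqref{eq:pauli-labels} cancel in the conjugation and can be ignored.

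To prove Eq.~\eqref{eq:twirl-identity-plan}, expand $X_{RB}=\sum_{w} X_w\otimes P_w$ in the Pauli operator basis of $B$, with the $X_w$ operators on $R$. Since any two Paulis either commute or anticommute, $P_vP_wP_v^\dagger=\chi(v,w)P_w$, where $\chi(v,w)=\pm1$ is a character of $\mathbb{Z}_2^2$. This character is trivial in $v$ only when $w=00$. Summing over $v$ therefore annihilates every term with $w\neq 00$ and leaves $X_{00}\otimes I$. Finally, $X_{00}=\tfrac12\Tr_B(X_{RB})$, which gives the identity. A direct check with the four conjugations $I,X,Y,Z$ on the operator units $\ket{i}\!\bra{j}$ works equally well.

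For the interchanged statement, in which $u_1$ is uniform and $u_2$ is fixed, the argument is slightly asymmetric. Here I would apply Eq.~\eqref{eq:twirl-identity-plan} first to $\rho_{RB}$ itself, averaging over $u_1$, to get $\rho_R\otimes I_B/2$. The subsequent fixed unitary $P_{u_2}$ acting on $B$ leaves $I_B/2$ unchanged, and linearity lets the average pass through that conjugation.

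The step that needs the most care is not algebraic but concerns the model. The conclusion relies on $B$ being a qubit, so that the Pauli group forms a unitary 1-design on the admitted mode; this is why the hypothesis says "confined to the admitted optical mode." It also relies on Alice having no intermediate access, since otherwise the state would not have the serial form above. I would state both points explicitly as the places where the hypotheses enter.
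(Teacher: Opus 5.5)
Your proposal is correct and follows essentially the same route as the paper: both reduce the claim to the one-qubit Pauli twirl. The only differences are minor. The paper absorbs $P_{u_1}$ by reindexing $v=u_2\oplus u_1$ as a permutation of labels, with phases cancelling in the conjugation, whereas you apply the twirl to $(I\otimes P_{u_1})\rho_{RB}(I\otimes P_{u_1}^\dagger)$ and use invariance of $\Tr_B$ under unitaries on $B$; you also prove the twirl identity that the paper simply invokes.
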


\begin{proof}
Conditioned on $u_1$ and averaged over $u_2$,
\begin{align}
\rho^{(u_1)}_{RB,\rm out}
&=\frac14\sum_{u_2}
(I_R\otimes P_{u_2}P_{u_1})\rho_{RB}
(I_R\otimes P_{u_1}^\dagger P_{u_2}^\dagger)\\
&=\frac14\sum_{v\in\mathbb{Z}_2^2}
(I_R\otimes P_v)\rho_{RB}(I_R\otimes P_v^\dagger)\\
&=\rho_R\otimes\frac{I_B}{2},
\end{align}
where the second line uses the fact that $u_2\mapsto v=u_2\oplus u_1$ is a permutation of the Pauli labels, and the last line is the one-qubit Pauli twirl.
\end{proof}

Consequently, no final measurement on the admitted qubit--ancilla system can reveal an individual factor when the other factor is an independent uniform Pauli. In information-theoretic notation,
\begin{equation}
I(U_1:RB_{\rm out})=I(U_2:RB_{\rm out})=0
\label{eq:factorMI}
\end{equation}
for the corresponding conditioned experiments. This is stronger than an honest-but-curious assumption on Alice's final measurement, but its scope is precise: it does not cover intermediate access, multiphoton or higher-dimensional probes, unmodeled optical degrees of freedom, or side channels that expose a modulator setting.

\section{Mode II: rotational Bell-state Loop-Back}
\label{sec:rotation}

The rotational mode restores the operation set used in the original passive-user single-qubit Loop-Back scheme. Each user selects
\begin{equation}
s_i\in\{-1,+1\},\qquad U_i=R(s_i\alpha),
\label{eq:rot-code}
\end{equation}
with $R(\theta)$ defined in Eq.~\eqref{eq:rotation-convention}. Since all rotations are about the same axis,
\begin{equation}
\Ueff=R(s_2\alpha)R(s_1\alpha)=R[(s_1+s_2)\alpha].
\label{eq:rot-compose}
\end{equation}
Thus
\begin{equation}
\Ueff=
\begin{cases}
I,&s_1\neq s_2,\\
R(+2\alpha),&s_1=s_2=+1,\\
R(-2\alpha),&s_1=s_2=-1.
\end{cases}
\label{eq:rot-cases}
\end{equation}
The decisive feature is therefore not a four-valued effective operation but a cancellation-versus-addition relation. Because the two settings differ only by the sign of a rotation about the same axis, the user-side alphabet also reduces from four Pauli settings in the reference mode to two calibrated rotation settings. This can simplify modulation control and calibration, although the physical component count remains implementation dependent.

For $\ket{\Phi^+}$,
\begin{equation}
(I\otimes R(\theta))\ket{\Phi^+}
=\cos\theta\ket{\Phi^+}+\sin\theta\ket{\Psi^-}.
\label{eq:phi-rotation}
\end{equation}
For a general Bell reference $\ket{\beta_r}$, conjugation by its Pauli label may change the sign of the orthogonal component but not the reference overlap. Hence
\begin{equation}
\left|\bra{\beta_r}(I\otimes R(\theta))\ket{\beta_r}\right|^2=\cos^2\theta.
\label{eq:overlap}
\end{equation}
Alice therefore need not resolve all four Bell labels. In the ideal model she applies the two-outcome projective instrument
\begin{equation}
\Pi_r=\ket{\beta_r}\!\bra{\beta_r},\qquad
\Pi_r^\perp=I_4-\Pi_r,
\label{eq:binary}
\end{equation}
and announces $N$ for the reference outcome or $C$ for the complement outcome. These probabilities assume a valid two-qubit measurement event. A lossy photonic receiver also produces a null outcome $\varnothing$ when two-photon occupancy cannot be verified. Such rounds are discarded before $N/C$ is interpreted, as detailed in Section~\ref{sec:implementation}.

\begin{proposition}[Conclusive agreement soundness]
\label{prop:soundness}
In the ideal rotational mode,
\begin{equation}
P(C\mid s_1\neq s_2)=0,
\qquad
P(C\mid s_1=s_2)=\sin^2(2\alpha).
\label{eq:soundness}
\end{equation}
Therefore a conclusive event implies agreement with zero false-positive probability.
\end{proposition}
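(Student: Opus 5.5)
The plan is to compute the complement probability directly from the composed operation and the reference overlap. By Eq.~\eqref{eq:rot-compose}, the returned state is $(I\otimes R[(s_1+s_2)\alpha])\ket{\beta_r}$. For the binary instrument \eqref{eq:binary}, the complement probability is
\begin{equation}
P(C\mid s_1,s_2)=1-\bigl|\bra{\beta_r}(I\otimes R[(s_1+s_2)\alpha])\ket{\beta_r}\bigr|^2 .
\end{equation}
Substituting the overlap formula \eqref{eq:overlap} with $\theta=(s_1+s_2)\alpha$ then gives $P(C\mid s_1,s_2)=\sin^2[(s_1+s_2)\alpha]$. This holds for every private reference $r$, so no averaging over $r$ is needed.

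To make the argument self-contained, I will first justify \eqref{eq:overlap}. Write $\ket{\beta_r}=(I\otimes P_r)\ket{\Phi^+}$, so the overlap equals $\bra{\Phi^+}(I\otimes P_r^\dagger R(\theta)P_r)\ket{\Phi^+}$. Using the identity $\bra{\Phi^+}(I\otimes M)\ket{\Phi^+}=\tfrac12\Tr M$, this becomes $\tfrac12\Tr R(\theta)=\cos\theta$ by cyclicity of the trace. The modulus squared is therefore $\cos^2\theta$, independent of $r$. The trace identity sidesteps the sign bookkeeping mentioned after Eq.~\eqref{eq:phi-rotation} and is the cleanest route. For $r=00$ one can also check it against Eq.~\eqref{eq:phi-rotation} directly.

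The proof then splits by the case structure \eqref{eq:rot-cases}. If $s_1\neq s_2$, then $\Ueff=I$, the state is exactly $\ket{\beta_r}$, and $\Pi_r^\perp$ annihilates it, so $P(C)=0$. If $s_1=s_2=\pm1$, then $\theta=\pm2\alpha$ and $P(C)=\sin^2(\pm2\alpha)=\sin^2(2\alpha)$. The soundness conclusion follows from Bayes' rule. Whenever the prior probability of agreement is positive and $\sin^2(2\alpha)>0$, the event $C$ has positive probability and
\begin{equation}
P(s_1\neq s_2\mid C)=\frac{P(C\mid s_1\neq s_2)P(s_1\neq s_2)}{P(C)}=0 ,
\end{equation}
so the false-positive probability is zero.

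There is no real obstacle here. The only point needing care is stating the ideal-model assumptions. Conditioning is on valid two-qubit events only, since null outcomes $\varnothing$ are discarded before $N/C$ is interpreted. Both users act on the same axis, so the rotations commute and compose additively as in \eqref{eq:rot-compose}. Under these assumptions the argument goes through.
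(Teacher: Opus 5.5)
Your proof is correct and follows the paper's argument: you case-split via Eq.~\eqref{eq:rot-cases}, where $\Ueff=I$ gives $P(C)=0$ and $\theta=\pm2\alpha$ together with Eq.~\eqref{eq:overlap} gives $\sin^2(2\alpha)$. Two of your steps go beyond the paper and make the argument self-contained: the identity $\bra{\Phi^+}(I\otimes M)\ket{\Phi^+}=\tfrac12\Tr M$, which derives the overlap formula the paper only asserts through its Pauli-conjugation remark, and the explicit Bayes step.
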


\begin{proof}
For $s_1\neq s_2$, Eq.~\eqref{eq:rot-cases} gives $\Ueff=I$, so the returned state is exactly $\ket{\beta_r}$ and $\Pi_r^\perp$ never clicks. For $s_1=s_2$, the effective angle is $\pm2\alpha$ and Eq.~\eqref{eq:overlap} gives $P(N)=\cos^2(2\alpha)$, hence $P(C)=\sin^2(2\alpha)$.
\end{proof}

At the native Loop-Back setting $\alpha=\pi/8$,
\begin{equation}
P(C\mid\mathrm{agreement})=\frac12,
\qquad
P(C\mid\mathrm{disagreement})=0.
\label{eq:pi8}
\end{equation}
With independent unbiased user choices, $P(\mathrm{agreement})=1/2$, so
\begin{equation}
P(C)=\frac14.
\label{eq:keyprob}
\end{equation}
The $1/4$ rate is not presented as an efficiency gain over the original single-qubit rotational protocol; it is the same intrinsic conclusive probability, now realized in a Bell-correlation setting with a locally maximally mixed traveler.

\subsection{Dual use: certification and raw-key generation}

Assign the local bit mapping
\begin{equation}
+1\leftrightarrow 0,\qquad -1\leftrightarrow 1.
\end{equation}
When Alice announces $C$, Proposition~\ref{prop:soundness} guarantees $s_1=s_2$. Since each user already knows its own sign, both users hold the same bit
\begin{equation}
k=\frac{1-s_1}{2}=\frac{1-s_2}{2}.
\label{eq:keybit}
\end{equation}
Thus the same event has two consistent interpretations: Alice certifies the relation ``the user settings agree,'' while the users extract one common raw-key bit. No opening of the sign is required.

\paragraph{Worked certification round}
Consider a private consistency-check setting in which two users hold binary settings and require the mediator to certify only whether those settings coincide. Let Alice privately prepare $\ket{\Psi^-}$ with $\alpha=\pi/8$. For $s_1=s_2=+1$, the effective operation is $R(\pi/4)$, and
\begin{equation}
(I\otimes R(\pi/4))\ket{\Psi^-}
=\frac{1}{\sqrt{2}}\left(\ket{\Psi^-}-\ket{\Phi^+}\right).
\end{equation}
Alice tests only $\{\Pi_r,\Pi_r^\perp\}$. A conclusive outcome $C$ occurs with probability $1/2$; if it occurs, Alice announces $C$ and thereby certifies that the two private choices agree, without learning whether their common sign was $+1$ or $-1$ under the prescribed instrument. The users already know their own sign and, using Eq.~\eqref{eq:keybit}, both obtain the raw bit $k=0$ in this example. If their signs had been opposite, the rotations would cancel and $C$ could not occur; if an equal-sign round produces $N$, no certificate or raw-key bit is accepted from that round.

\begin{proposition}[Common-sign privacy under the prescribed instrument]
\label{prop:binaryprivacy}
Assume the specified Bell preparation and the binary instrument in Eq.~\eqref{eq:binary}. Conditioned on agreement, its classical outcome distribution is independent of whether the common sign is $+1$ or $-1$. Conditioned on either outcome, the normalized post-measurement state also carries no common-sign information. Hence the prescribed source-and-instrument model reveals the relation but not the common raw-key bit.
\end{proposition}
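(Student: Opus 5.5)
The plan is to compute the returned state explicitly for a general Bell reference $\ket{\beta_r}$ and an agreeing pair of settings, $s_1=s_2=s\in\{-1,+1\}$. By Eq.~\eqref{eq:rot-cases} the effective operation is then $R(2s\alpha)$. I would write the rotation from Eq.~\eqref{eq:rotation-convention} as $R(\theta)=\cos\theta\,I-i\sin\theta\,Y$. Then
\[
(I\otimes R(\theta))\ket{\beta_r}=\cos\theta\,\ket{\beta_r}-i\sin\theta\,(I\otimes YP_r)\ket{\Phi^+}.
\]
Since $YP_r$ equals $P_{r\oplus 11}$ up to a phase, Eq.~\eqref{eq:bell-label} gives
\[
(I\otimes R(\theta))\ket{\beta_r}=\cos\theta\,\ket{\beta_r}+e^{i\varphi_r}\sin\theta\,\ket{\beta_{r\oplus 11}}.
\]
Here the phase $e^{i\varphi_r}$ depends only on $r$, not on $\theta$. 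This generalizes Eq.~\eqref{eq:phi-rotation} and recovers Eq.~\eqref{eq:overlap}. The structural point to record is that the complement component lies entirely along a single Bell vector, $\ket{\beta_{r\oplus 11}}$. Changing the common sign $s$ only flips the sign of its coefficient.

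For the classical outcome distribution, I would apply the binary instrument of Eq.~\eqref{eq:binary} to this state with $\theta=\pm 2\alpha$. This gives $P(N\mid s)=\cos^2(2\alpha)$ and $P(C\mid s)=\sin^2(2\alpha)$ for both values of $s$, which is consistent with Proposition~\ref{prop:soundness}. The distribution is therefore independent of $s$ for each fixed $r$. It is then also independent after averaging over Alice's uniform $r$, or after conditioning on $r$, which is the stronger form because Alice knows $r$.

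For the post-measurement states, on outcome $N$ the normalized state is $\ket{\beta_r}$ for both signs. On outcome $C$ the normalized state is $\Pi_r^\perp$ applied to the state and rescaled, which gives $\pm e^{i\varphi_r}\ket{\beta_{r\oplus 11}}$. These two vectors differ only by a global phase, so their density operators coincide. Combining the two outcomes, the full classical--quantum output $\sum_{o\in\{N,C\}}P(o\mid s)\ket{o}\!\bra{o}\otimes\sigma_{o}$ is identical for $s=+1$ and $s=-1$. Hence the mutual information between the common sign and everything Alice holds after the instrument is zero, and any later processing of the retained pair cannot reveal the key bit of Eq.~\eqref{eq:keybit}.

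The step I expect to need the most care is this collapse of the sign into a global phase. It holds only because the instrument projects onto a single reference ray and the Bell source confines the orthogonal component to one Bell vector. I would stress that this is exactly where the assumptions of the proposition enter. If Alice instead used a non-Bell source with an ancilla, or a finer measurement that interferes $\ket{\beta_r}$ with $\ket{\beta_{r\oplus 11}}$ (for example, projecting onto $\cos(2\alpha)\ket{\beta_r}\pm e^{i\varphi_r}\sin(2\alpha)\ket{\beta_{r\oplus 11}}$), the relative sign would become observable and the privacy claim would fail. This is why the statement is restricted to the prescribed source and instrument.
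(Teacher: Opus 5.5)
Your proposal is correct and follows the paper's proof. The paper writes the two agreement states as $\cos(2\alpha)\ket{\beta_r}\pm\sin(2\alpha)\ket{\gamma_r}$ for an appropriately phased orthogonal Bell component, then notes that both signs give the same outcome probabilities and project to the same rays up to a global sign; your explicit identification $\ket{\gamma_r}=e^{i\varphi_r}\ket{\beta_{r\oplus 11}}$ via $YP_r\propto P_{r\oplus 11}$ supplies the computation that the paper only asserts.
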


\begin{proof}
For an appropriately phased orthogonal Bell component $\ket{\gamma_r}$, the two agreement states can be written
\begin{equation}
\ket{\psi_\pm}=\cos(2\alpha)\ket{\beta_r}\pm\sin(2\alpha)\ket{\gamma_r}.
\end{equation}
Both signs give the same probabilities $\cos^2(2\alpha)$ and $\sin^2(2\alpha)$. After outcome $N$, both states project to $\ket{\beta_r}$; after outcome $C$, both project to $\ket{\gamma_r}$ up to a global sign.
\end{proof}

This proposition is deliberately instrument-specific. Before the binary measurement,
\begin{equation}
\left|\braket{\psi_+|\psi_-}\right|=|\cos(4\alpha)|.
\label{eq:agreement-overlap}
\end{equation}
At $\alpha=\pi/8$ the two agreement states are orthogonal. A mediator that replaces the prescribed binary instrument with a different measurement can therefore distinguish the two common signs perfectly in the ideal case. Rotational key privacy is consequently a \emph{measurement-constrained} property, not malicious-mediator security.

\subsection{Sensitivity to rotation error}

Let the intended rotations be $s_i\alpha$ but let the actual angles contain small errors $\delta_i$. For a disagreement pair, the ideal cancellation angle $0$ becomes
\begin{equation}
\Delta=\delta_1+\delta_2,
\end{equation}
so the false-certificate probability is
\begin{equation}
P_{\rm false}=\sin^2\Delta\simeq \Delta^2
\label{eq:false}
\end{equation}
for $|\Delta|\ll1$. For an agreement pair, the certificate probability becomes
\begin{equation}
P_C=\sin^2(\pm2\alpha+\Delta).
\label{eq:rot-error}
\end{equation}
These expressions give a direct calibration target and provide an analytic error model for the rotational mode.

\section{Security model and explicit boundaries}
\label{sec:security}

The purpose of this section is to state only the guarantees supported by the architecture. Table~\ref{tab:security} separates them from assumptions that remain external to the protocol.

\begin{table}[!t]
\centering
\caption{Security properties and boundaries of the two operating modes.}
\label{tab:security}
\small
\renewcommand{\arraystretch}{1.08}
\begin{tabularx}{\textwidth}{>{\raggedright\arraybackslash}p{3.05cm}>{\raggedright\arraybackslash}X>{\raggedright\arraybackslash}X}
\toprule
Threat or property & Pauli-composition mode & Rotational mode \\
\midrule
Observer of traveler only & For honest Bell preparation, sees $I/2$ & Same \\
Mediator chooses the qubit source and final measurement & Individual factor hidden by Proposition~\ref{prop:twirl}, within the admitted qubit/mode model, provided no intermediate access and the other Pauli is uniform & Not covered; correctness and common-sign secrecy require the specified Bell source and instrument \\
Mediator follows prescribed readout & Learns $u_{\rm eff}$ but not either uniform factor & Learns $C/N$ on valid events; $C$ certifies agreement without revealing the common sign \\
Access between $B_1$ and $B_2$ & Not covered by Pauli-twirl theorem & Not covered \\
Multiphoton, out-of-mode, or Trojan-horse probe & Not covered by Proposition~\ref{prop:twirl}; requires a protected local optical boundary & Same \\
Separable replacement of legitimate Bell traveler & Not covered by Proposition~\ref{prop:twirl} alone; dedicated Bell verification/test rounds are required & Same; binary data rounds alone are not a full intrusion test \\
Composable security against coherent/collective attacks & Not claimed & Not claimed \\
\bottomrule
\end{tabularx}
\end{table}

\subsection{Local blindness of an honest Bell resource}

Equation~\eqref{eq:maxmix} gives a simple architecture-level property. An external observer with access only to the legitimate traveler obtains no local information about the Bell label or any unitary applied before that observation, because
\begin{equation}
U\frac{I}{2}U^\dagger=\frac{I}{2}.
\end{equation}
This differs from the single-qubit Loop-Back setting, where intermediate nonorthogonality is the central physical limitation on discrimination. Bell encoding moves the relevant information into nonlocal correlations.

\subsection{Why the Pauli mode tolerates a stronger endpoint mediator model}

Proposition~\ref{prop:twirl} is the strongest privacy statement in this paper. Within the admitted single-qubit optical mode, Alice need not prepare the advertised Bell state or perform the advertised BSM. The other user's uniform Pauli acts as a quantum one-time pad on the returned qubit. The result still requires Alice to have no access between the two user operations and does not extend to additional photons, modes, or degrees of freedom. It therefore differs from the inter-user decoy of Li \emph{et al.} \cite{li2005}: it protects against arbitrary endpoint source/readout choices inside the qubit model, but not against internal or out-of-model probing.

\subsection{Why the rotational mode requires a stricter mediator assumption}

The rotational alphabet is only two-valued and does not form a one-qubit depolarizing twirl. Equation~\eqref{eq:agreement-overlap} therefore exposes an unavoidable boundary. Correctness and common-sign privacy are established for the specified Bell preparation and binary instrument; a mediator that changes the source or measurement is outside that result. In particular, at $\alpha=\pi/8$ the two equal-sign states are orthogonal before coarse-graining, so a deviating mediator can extract the common sign with an appropriate measurement. The rotational mode therefore requires a trusted, certified, or otherwise enforced source-and-measurement boundary.

This is the central security--complexity trade-off between the two modes. The Pauli mode provides stronger factor privacy at the cost of four local operations and complete Bell readout. The rotational mode provides a smaller user alphabet and binary readout, but it needs a stronger assumption on the mediator.

\subsection{Intermediate probing and the cost of detector-free users}

The absence of user-side detectors and state preparation is operationally attractive but removes active checking functions used in several earlier protocols. If Alice or Eve can inject a nonstandard probe into $B_1$, retrieve it before $B_2$, and perform a joint measurement with an ancilla, an individual unitary may be characterized as an oracle. Neither Proposition~\ref{prop:twirl} nor Proposition~\ref{prop:binaryprivacy} covers that situation.

The required implementation boundary is therefore explicit: the user module must accept only the legitimate optical mode and the network must prevent unauthorized extraction between the two operations. Practical QKD studies of Trojan-horse probing show why wavelength dependence, back-reflection, incoming optical power, and monitoring sensitivity must be treated as security parameters and motivate filtering, isolation, attenuation or input monitoring, and reflection control \cite{gisin2006trojan,jain2015risk}. The passive-user constraint does not require all such physical protections to reside inside $B_1$ or $B_2$. In an infrastructure-assisted or access-network realization, the private transformation module can be placed behind a short, characterized local optical boundary: passive spectral/temporal filtering and isolation may precede the modulator, while active power, wavelength, or timing monitoring can be implemented by the surrounding access infrastructure. The cryptographic user then remains source-free and detector-free even though the access boundary itself may contain active protection hardware.

This placement should be interpreted as architectural enforcement, not as a new protocol-level secrecy proof. It restricts direct adversarial probe access to the private transformation only under a trusted or certified access boundary; if that boundary is bypassed or controlled by a malicious infrastructure provider, neither Proposition~\ref{prop:twirl} nor Proposition~\ref{prop:binaryprivacy} supplies protection. Li-type user-side control measurements and decoy-state preparation provide a different route to active checking \cite{li2005}, but adding them at the users would abandon the source-free, detector-free passive-user constraint studied here. They are therefore treated as an alternative architectural point in the trust--resource space, not as countermeasures within the present passive protocol.

\section{Resource and performance trade-offs}
\label{sec:resources}

Four quantities must be kept distinct: intrinsic conclusive probability, analyzer success, valid two-photon detection, and end-to-end throughput.

In the single-qubit rotational Loop-Back mode, the ideal useful-round probability is
\begin{equation}
P_{\rm use}^{\rm SQ}=\frac14.
\end{equation}
In the deterministic Bell-Pauli mode, every successful Bell discrimination identifies $u_{\rm eff}$, so
\begin{equation}
P_{\rm use}^{\rm P}=\pBSM.
\end{equation}
With an ideal complete BSM, $\pBSM=1$; with passive linear optics and no additional nonvacuum resources, unambiguous discrimination of all four polarization Bell states is limited to $1/2$ \cite{calsamiglia2001}. Thus the improvement in intrinsic post-selection can be partially offset by the measurement apparatus.

For the rotational Bell mode,
\begin{equation}
P_{\rm use}^{\rm R}=P(C)=\frac12\sin^2(2\alpha)
\end{equation}
for unbiased user choices in the ideal instrument. At $\alpha=\pi/8$, $P_{\rm use}^{\rm R}=1/4$. If $p_{\rm valid}$ is the probability of verifying the returned two-photon event, the accepted conclusive probability is
\begin{equation}
P_{\rm acc}^{\rm R}=p_{\rm valid}\,\frac12\sin^2(2\alpha).
\end{equation}
All other detection patterns, including loss, are assigned $\varnothing$ and discarded. The Bell rotational mode therefore offers no intrinsic rate gain. Its potential benefit is a two-setting same-axis alphabet and a binary relation readout, which may reduce modulation/calibration and outcome-classification burden without guaranteeing fewer components or detectors.

Table~\ref{tab:tradeoff} makes the cost of each choice explicit.

\begin{table}[H]
\centering
\caption{Resource/performance trade-off within the Loop-Back family. Channel transmission and detector efficiency are omitted from the intrinsic probabilities.}
\label{tab:tradeoff}
\scriptsize
\setlength{\tabcolsep}{3.5pt}
\begin{tabularx}{\textwidth}{>{\raggedright\arraybackslash}p{2.7cm}>{\raggedright\arraybackslash}p{4.25cm}>{\raggedright\arraybackslash}p{2.65cm}X}
\toprule
Mode & Endpoint and Alice-side resources & Intrinsic useful event & Main cost or limitation \\
\midrule
Single-qubit rotational Loop-Back & Users: $R(\pm\pi/8)$ only. Alice: single-qubit preparation and measurement & $1/4$ & Non-orthogonal intermediate states; intrinsic post-selection \\
Bell-Pauli Loop-Back & Users: four Pauli settings, no source/detector. Alice: Bell source, retained half, full BSM & $\pBSM$ & Complete Bell discrimination; four-setting modulation; active-probing boundary \\
Rotational Bell Loop-Back & Users: $R(\pm\alpha)$ only. Alice: Bell source, retained half, binary Bell comparator & $\tfrac12\sin^2(2\alpha)$ intrinsically; multiplied by $p_{\rm valid}$ in practice & Trusted or enforced measurement; intrinsic post-selection and null detection events \\
\bottomrule
\end{tabularx}
\end{table}

An implementation-level rate should include channel and detector factors. For example, the Bell-Pauli raw information rate may be written schematically as
\begin{equation}
R_{\rm raw}^{\rm P}=2\,\pBSM\,p_{\rm ch}\,p_{\rm det},
\end{equation}
where the factor 2 is the two-bit Pauli label. This is a raw architectural throughput, not a secret-key-rate formula. A composable key rate requires an adversarial model, parameter estimation, error correction, and privacy amplification and is outside the claim made here.

\section{Photonic feasibility and network use}
\label{sec:implementation}

The proposal is theoretical; no experiment is reported. Its elementary components, however, correspond to standard photonic functions. Alice requires a Bell-pair source and a delay line or quantum memory long enough to retain one subsystem during the loop. The users require only a calibrated single-qubit polarization transformation. In the original Loop-Back convention, $R(\pm\pi/8)$ can be implemented by a polarization rotator or an electro-optic polarization modulator with two settings. Because both settings are rotations about the same axis and differ only in sign, this alphabet can reduce calibration and drive-control complexity relative to a four-setting Pauli implementation; the extent of any hardware reduction depends on the specific modulator.

The Pauli mode requires complete Bell-state information. Deterministic complete Bell analysis is not available with passive linear optics alone, although additional degrees of freedom or auxiliary resources can overcome the standard $50\%$ limit \cite{calsamiglia2001,schuck2006}. The rotational mode asks a strictly coarser measurement question. Alice needs only to decide whether the returned pair occupies her one-dimensional Bell reference or its orthogonal complement; she does not need to identify which Bell component in that complement was populated. Consequently, an optical realization can in principle be organized around fewer resolved outcome classes and simpler coincidence/event classification than a four-outcome Bell analyzer. This is a potential hardware advantage of both the rotational key-establishment and agreement-certification interpretations, since they use the same binary observable. We do not claim a universal reduction in detector or component count, which depends on the photonic encoding and analyzer design.

A concrete polarization-encoded realization follows directly from the private Bell reference. Under the labeling of Eq.~\eqref{eq:bell-label}, Alice can apply before detection the local Pauli normalization
\begin{equation}
Q_r=P_{r\oplus 11},
\label{eq:reference-normalization}
\end{equation}
so that
\begin{equation}
(I\otimes Q_r)\ket{\beta_r}\sim\ket{\Psi^-},\qquad
(I\otimes Q_r)\ket{\gamma_r}\sim\ket{\tau_r},
\label{eq:singlet-normalization}
\end{equation}
where $\ket{\gamma_r}$ is the orthogonal Bell component appearing in the rotational state of Section~\ref{sec:rotation} and $\ket{\tau_r}$ is one of the three symmetric triplet Bell states. The adaptive private-reference measurement is therefore unitarily reduced to the fixed singlet--triplet projective decomposition
\begin{equation}
\Pi_{\rm s}=\ket{\Psi^-}\!\bra{\Psi^-},\qquad
\Pi_{\rm t}=I_4-\Pi_{\rm s}.
\label{eq:singlet-triplet}
\end{equation}
Thus Alice's private label controls only a local premeasurement setting; the photodetection stage itself can remain fixed.

Two-photon interference at a balanced beam splitter supplies a natural implementation of this symmetry filter \cite{hong1987,mitchell2003}. For indistinguishable photons entering the two input ports, the antisymmetric singlet produces antibunching, whereas the symmetric triplet sector produces bunching. A verified separate-output coincidence realizes $N$, and a verified same-output two-photon event realizes $C$. Any record that does not verify two-photon occupancy---including one-photon loss or no click---is labeled $\varnothing$ and discarded. Photon-number-resolving detectors, or an equivalent spatial-splitting detector tree, are therefore useful. Temporal, spectral, polarization, and spatial mode matching remain necessary because imperfect indistinguishability reduces interference visibility and increases comparator error. Figure~\ref{fig:binary-photonic} summarizes the receiver.

\begin{figure}[H]
\centering
\resizebox{0.98\textwidth}{!}{%
\begin{tikzpicture}[>=Latex,font=\small]
\tikzset{
  block/.style={draw,rounded corners,align=center,text width=3.05cm,minimum height=1.55cm,inner sep=6pt},
  smallblock/.style={draw,rounded corners,align=center,text width=3.20cm,minimum height=1.55cm,inner sep=6pt},
  flow/.style={->,thick},
  detail/.style={draw=gray!65,rounded corners,align=center,font=\footnotesize,text width=3.65cm,minimum height=1.52cm,inner sep=5pt},
  record/.style={draw=gray!65,rounded corners,align=center,font=\footnotesize,text width=4.25cm,minimum height=1.52cm,inner sep=5pt}
}
\node[block] (ret) at (-5.8,1.15) {\textbf{Returned Bell pair}\\$\ket{\psi}=a\ket{\beta_r}+b\ket{\gamma_r}$};
\node[block] (norm) at (-1.95,1.15) {\textbf{Reference normalization}\\$Q_r=P_{r\oplus 11}$\\$\ket{\beta_r}\mapsto\ket{\Psi^-}$};
\node[block] (bs) at (1.95,1.15) {\textbf{50:50 beam splitter}\\Hong--Ou--Mandel\\symmetry test};
\node[smallblock] (det) at (5.80,1.15) {\textbf{Two-photon occupancy}\\PNR detectors or\\splitter-tree equivalent};
\draw[flow] (ret) -- (norm);
\draw[flow] (norm) -- (bs);
\draw[flow] (bs) -- (det);

\node[detail] (classes) at (-1.95,-1.55) {\textbf{Normalized sectors}\\$\ket{\beta_r}\rightarrow\ket{\Psi^-}$ (singlet)\\$\ket{\gamma_r}\rightarrow\ket{\tau_r}$ (triplet)};
\node[record] (out) at (5.80,-1.65) {\textbf{Physical record}\\separate outputs $\Rightarrow N$\\same output, two photons $\Rightarrow C$\\loss/unverified\\occupancy $\Rightarrow\varnothing$};
\draw[flow] (norm.south) -- (classes.north);
\draw[flow] (det.south) -- (out.north);

\node[font=\footnotesize,align=center] at (0,2.65) {Alice's private label $r$ changes only the normalization; the symmetry test and detector stage remain fixed};
\end{tikzpicture}%
}
\caption{Conceptual photonic realization of the rotational Bell comparator. Alice uses the private Bell label $r$ to normalize the reference state to the antisymmetric singlet. A balanced beam splitter converts singlet--triplet symmetry into antibunched/bunched occupancy. The lower boxes separate the normalized sectors from the physical record: verified separate-output coincidences implement $N$, verified same-output two-photon events implement $C$, and loss or unverified occupancy gives the discarded outcome $\varnothing$. Photon-number resolution, or an equivalent spatial-splitting detector tree, prevents bunching from being confused with one-photon loss.}
\label{fig:binary-photonic}
\end{figure}

Conditioned on a valid two-photon event, the receiver implements the intended binary $1+3$ decomposition (singlet versus triplet). The physical record is nevertheless ternary, $\{N,C,\varnothing\}$, because loss and ambiguous occupancy cannot be assigned to either projector. This does not evade the standard $50\%$ limitation by implementing a better complete BSM; it avoids complete Bell labeling altogether. High indistinguishability, synchronization, verified two-photon detection, and calibrated efficiencies remain necessary. Setting-dependent or adversarially controlled loss is outside the present security claim.

At the network level, detector-free endpoints are most relevant when quantum resources are intentionally centralized, for example in passive access, edge, mobile, or route-configurable architectures. Current OSN research emphasizes that useful quantum networking also depends on resilience, entanglement-resource allocation, and key-management interfaces \cite{tunc2025,njoku2025,lauterbach2026}. The present primitive can be viewed as an access-layer candidate feeding such a key-management plane. Its output would still require authenticated classical signaling, key buffering, reconciliation where applicable, and application-facing key delivery.

The agreement-certification interpretation also permits a non-key use. A conclusive event certifies that two private binary settings coincide without requiring either user to reveal the setting itself. This may support private configuration-consistency checks, matched-policy gating, or distributed control rules whose acceptance condition is equality of two hidden binary settings when the mediator is trusted to enforce the binary instrument. Importantly, the event does not distinguish $(+,+)$ from $(-,-)$ and therefore does not by itself certify an affirmative value such as mutual consent or dual approval. These applications are secondary to the key-establishment analysis but clarify why the rotational mode exposes a relation rather than the full effective operation.

\section{Discussion}
\label{sec:discussion}

Comparison with Li \emph{et al.} isolates the incremental result \cite{li2005}. Their serial Bell--Pauli reconstruction is the baseline, whereas the rotational mode combines a private complete-basis reference, strict passive endpoints, the $R(\pm\alpha)$ alphabet, and a binary readout. Its conclusive event has a dual role absent from Li's reconstruction procedure: Alice certifies only that the settings agree, while the users obtain their common sign as a raw-key bit. This is a functional extension rather than a claim that removing user hardware alone creates a new cryptographic mechanism.

The extension also imposes a sharper trust--complexity trade-off. Pauli randomization hides either factor from a final endpoint measurement within the admitted qubit model, but not from intermediate or out-of-mode probes. The rotational mode reduces the local alphabet and readout classes, yet its common-sign privacy depends on the prescribed source and instrument. Bell encoding provides a locally maximally mixed traveler and a retained nonlocal reference, not a universal rate gain: at $\alpha=\pi/8$ the intrinsic conclusive probability remains $1/4$, and a physical receiver adds null events. Composable security, adversarial-loss analysis, and active-probe protection remain open.

\section{Conclusions}

This work separates a known deterministic Bell--Pauli baseline from a rotational Loop-Back extension. The latter places $R(\pm\pi/8)$ inside a retained-half, private-reference Bell geometry. Opposite rotations cancel and equal rotations reach the reference complement with probability $1/2$; for unbiased choices the resulting conclusive probability is $1/4$. The same conclusive event certifies agreement to Alice and gives the users one common raw-key bit without opening its sign under the prescribed instrument.

The guarantees are deliberately mode specific. A uniform Pauli hides the other user's factor from an endpoint mediator within the admitted qubit model when intermediate access is excluded. Rotational common-sign privacy instead requires the specified source and binary measurement. Its two-setting alphabet and singlet--triplet comparator may simplify control and readout, but entanglement generation, verified two-photon detection, post-selection, a protected optical boundary, and classical key post-processing remain necessary. No composable-security or experimental-validation claim is made.

\section*{Statements and Declarations}

\subsection*{Funding}
No external funding was received for this work.

\subsection*{Competing interests}
The author declares no competing interests.

\subsection*{Author contributions}
Luis Adri\'an Lizama-P\'erez: conceptualization, methodology, formal analysis, investigation, writing--original draft, writing--review and editing.

\subsection*{Data availability}
No new experimental or external datasets were generated or analyzed. All results reported here are analytical.

\renewcommand{\bibfont}{\small}
\setlength{\bibsep}{0.62em}

\end{document}